\theoremstyle{definition}
\newtheorem{thm}{Theorem}
\begin{document}
\title{Fundamental Limits of Covert Bit Insertion in Packets}

\author{ \IEEEauthorblockN{Ramin Soltani\IEEEauthorrefmark{1},
		Dennis Goeckel\IEEEauthorrefmark{1}, Don Towsley\IEEEauthorrefmark{2}, and Amir Houmansadr\IEEEauthorrefmark{2}}
	
	\IEEEauthorblockA{\IEEEauthorrefmark{1}Electrical~and~Computer~Engineering~Department,~University~of~Massachusetts,~Amherst,
		\{soltani, goeckel\}@ecs.umass.edu\\}
	\IEEEauthorblockA{\IEEEauthorrefmark{2}College of Information and Computer Sciences, University of Massachusetts, Amherst,
		\{towsley, amir\}@cs.umass.edu}

                       \thanks{This work has been supported by the National Science Foundation under grants CNS-1564067 and CNS-1525642.}
                         \thanks{ This work has been presented at the 56th Annual Allerton Conference on Communication, Control, and Computing,  October 2018.}
                       \thanks{Personal use of this material is permitted. Permission from IEEE must be obtained for all other uses, in any current or future media, including reprinting/republishing this material for advertising or promotional purposes, creating new collective works, for resale or redistribution to servers or lists, or reuse of any copyrighted component of this work in other works.}
}

\date{}
\maketitle
\thispagestyle{plain}
\pagestyle{plain}

\begin{abstract}
Covert communication is necessary when revealing the mere existence of a message leaks sensitive information to an attacker. Consider a network link where an authorized transmitter Jack sends packets to an authorized receiver Steve, and the packets visit Alice, Willie, and Bob, respectively, before they reach Steve. Covert transmitter Alice wishes to alter the packet stream in some way to send information to covert receiver Bob without watchful and capable adversary Willie being able to detect the presence of the message. In our previous works, we addressed two techniques for such covert transmission from Alice to Bob: packet insertion and packet timing. In this paper, we consider covert communication via bit insertion in packets with available space (e.g., with size less than the maximum transmission unit). We consider three scenarios: 1) packet sizes are independent and identically distributed (i.i.d.) with a probability mass function (pmf) whose support is a set of one bit spaced values; 2) packet sizes are i.i.d. with a pmf whose support is arbitrary; 3) packet sizes may be dependent. For the first and second assumptions, we show that Alice can covertly insert $\mathcal{O}(\sqrt{n})$ bits of information in a flow of $n$ packets; conversely, if she inserts $\omega(\sqrt{n})$ bits of information, Willie can detect her with arbitrarily small error probability. For the third assumption, we prove Alice can covertly insert on average $\mathcal{O}(c(n)/\sqrt{n})$ bits in a sequence of $n$ packets, where $c(n)$ is the average number of conditional pmf of packet sizes given the history, with a support of at least size two.
\end{abstract}

\textbf{Keywords: Covert Communication, Bit Insertion, Security and Privacy, Computer Networks, UDP, TCP/IP} 

\section{Introduction}
\IEEEPARstart{W}{ith} the rapid growth of communication systems and the Internet, security and privacy have emerged as critical concerns~\cite{nichols2001wireless,takbiri2018asymptotic,ISIT17-longversion1,hadian2016privacy,hadian2018privacy,nazanin_ISIT2018}. Protecting the content of messages through encryption~\cite{talb2006} or information-theoretic methods~\cite{bloch11pls} forms the majority of the research on security. However, in many scenarios, the security is achieved by hiding not only the message but also the existence of the message. In other words, if the existence of the message is revealed, it can leak sensitive information to the adversary~\cite{snowden} or cause threats to users. Such scenarios include a situation where people do not have the freedom to communicate, military applications, and protecting location privacy of mobile users. Covert communication provides the solution in such scenarios by hiding the existence of the communication.

Although spread spectrum~\cite{simon94ssh} and steganography~\cite{ker07pool, ker2009square,filler2009square} have been studied broadly~\cite{petitcolas1999information}, only recently have the fundamental limits of covert communication over noisy continuous-valued channels been studied.  Bash et al.~\cite{bash_isit2012,bash_jsac2013} showed that the number of bits that can be transmitted covertly and reliably on an additive white Gaussian noise (AWGN) channel is on the order of the square root of the number of channel uses. The work of Bash et al. motivated a significant body of work~\cite{jaggi_isit2013,bash_isit2013,bash_isit2014,bash2015hiding,bloch2016covert,soltani2014covert,soltani2015covert,soltani2016allerton,soltani2017towards,soltani2018covert,soltani2018fundamental,tahmasbi2017first,arumugam2018embedding}, much of which is focused on models appropriate for point-to-point wireless communication channels.

\begin{figure}
\begin{center}
\includegraphics[width=\textwidth/2,height=\textheight,keepaspectratio]{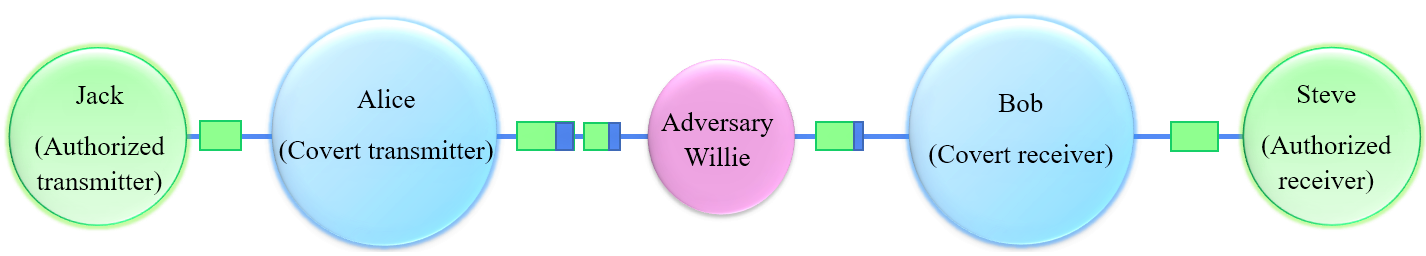}
\end{center}
 \caption{System configuration: Authorized user Jack sends packets through the network to authorized user Steve. Alice adds her information to packets transmitted by Jack to communicate covertly with Bob without detection by the (adversary) warden Willie. Bob removes Alice's inserted bits from the packets. In each packet, the blue part shows the information added by Alice and the green part shows Jack's information.}
 \label{fig:1}
 \end{figure}
 
This paper focuses on the covert transmission of information on a network link. Consider a link where an authorized user (Jack) transmits a packet stream of length $n$ to an authorized receiver (Steve). The link is watched by an authority, warden Willie, who seeks to detect whether anybody other than the authorized users Jack and Steve might be using the link.  Indeed, a covert transmitter Alice might attempt to transmit information reliably and covertly to a covert receiver Bob. After originating at Jack, the packet stream visits Alice, Willie, Bob, and Steve, respectively, as shown in Fig~\ref{fig:1}. In~\cite{soltani2015covert,soltani2016allerton}, we presented two techniques for covert communication between Alice and Bob for the case that Jack's packet timings are governed by a Poisson process ~\cite{soltani2015covert} or a renewal process~\cite{soltani2016allerton}. If Willie can verify the sources of the packets, then Alice can transmit $\mathcal{O}(n)$ bits to Bob via timing if the timing channel between Alice and Bob has non-zero capacity. If the timing channel between Alice and Bob has zero capacity and Willie cannot authenticate packet senders, Alice can transmit $\mathcal{O}(\sqrt{n})$ bits to Bob via packet insertion.

In this paper, we turn our attention to a third technique for covert communication between Alice and Bob: employing the space in the payload of the packets. In~\cite{handel1996hiding}, Handel et al. propose information embedding in the two least significant bits of packet headers as well as the time stamp. Rowland et al.~\cite{rowland1997covert} proposed information embedding in the TCP/IP header fields such as the IP identification field, the initial sequence number field, and the TCP acknowledge sequence number field. Furthermore, there is a significant body of work related to embedding information in the header of the packets~\cite{fisk2002eliminating,ahsan2002practical,shah2011network,cauich2005data}. 
Here, we do not alter the original information in the payload of the packets, insert new packets, or employ timing channels. Rather, we consider adding extra information to the payload of the packets when packet sizes are variable without an adversary noticing a change in the packet length distribution. To the best of our knowledge, no previous work offers theoretical guarantees on the trade-off between the efficiency of bit insertion in payloads (number of inserted bits) and covertness. 

We consider a setting similar to those of~\cite{soltani2015covert,soltani2016allerton}. Consider a network link where an authorized transmitter (Jack) sends a flow of $n$ packets to an authorized receiver (Steve). Alice wishes to employ the available (unused) space in the packets to insert her information to communicate covertly with Bob, in the presence of the network warden, Willie, who is observing the sizes and timings of the packets transmitted by Alice to detect such transmissions (see Fig.~\ref{fig:1}). Alice can use one bit of the header (as motivated by~\cite{fisk2002eliminating,ahsan2002practical,shah2011network,cauich2005data,rowland1997covert,handel1996hiding}), which we refer to as ``flag bit'', and the available space in the packets. Willie cannot modify packets. He knows the joint probability mass function (pmf) of the packet sizes of the stream transmitted by Jack, so he seeks to apply hypothesis testing to verify whether the packet process has the proper characteristics. First we consider Assumption 1, i.e., packet sizes are i.i.d. with a pmf whose support is a set of one bit spaced values across an interval $[S_{\min},S_{\max}]$, and we show that Alice can insert $\mathcal{O}\left(\sqrt{n}\right)$ bits in the packet stream to communicate with Bob while lower bounding Willie's error probability $\mathbb{P}_{e}$ by $\frac{1}{2}-\epsilon$ for any $\epsilon>0$ (see Theorem~\ref{thm:iid}). Conversely, we prove that if Alice inserts ${\omega}\left(\sqrt{n}\right)$ bits in the sequence of $n$ packets, Willie will detect her with arbitrarily small error probability.

To establish the result, we employ the following construction. Alice generates a secret key of length $\mathcal{O}(\sqrt{n}\log{n})$ and shares it with Bob before the communication. The key indicates the location of the $\mathcal{O}(\sqrt{n})$ packets to be selected by Alice for a possible bit insertion. If the selected packet has available space, i.e., its size satisfies $S \leq S_{\max}-\lfloor K/2 \rfloor$, where $K$ is the number of possible sizes implied by $f(x)$, then Alice adds $\lfloor K/2 \rfloor$ bits to the packet and changes the flag bit to one; otherwise, she only changes the flag bit to zero. On the other side, Bob extracts Alice's information using the shared key and the flag bits. We extend our results to the case where the possible packet sizes are spaced arbitrarily (Assumption 2), and the case where packet sizes may be dependent (Assumption 3).

The remainder of the paper is organized as follows. We present the system model and the metrics in Section~\ref{modcons}. Then, we present construction and analysis for Assumptions~1,~2, and~3 in Sections~\ref{sec:iid},~\ref{sec:iid2}, and~\ref{sec:niid}, respectively. In Section~\ref{sec:f} we present the future work. Finally, we discuss the assumptions and results in Section~\ref{dis}, and we conclude in Section~\ref{con}.

\section{System Model and Metrics}
\label{modcons}

\subsection{System Model}
Suppose that Jack transmits a sequence of $n$ packets to Steve on a network link, and Alice wishes to add her information to the packets on the link to communicate with Bob without being detected by adversary Willie. Alice is allowed to use a bit from the header (e.g., least significant bit) which we refer to as a ``flag bit'' and the available space in the payload of each packet. She generates a secret key of length $\sqrt{n}\log{n}$ bits and shares it with Bob before insertion. The key indicates the packets that will possibly be manipulated by Alice. Willie knows the joint pmf of the packet sizes and Alice's bit insertion scheme but not the key. He cannot observe the content of the packets, but he observes the sizes of the packets coming from Alice and attempts to detect Alice's transmission (see Fig.~\ref{fig:1}). Willie cannot observe the contents of the packets or modify them. We consider the following assumptions:
\begin{itemize}
	\item Assumption 1: The packet sizes are independent and identically distributed (i.i.d.) with pmf $f(x)$ whose support is a set of one bit spaced values across an interval $[S_{\min},S_{\max}]$, and the number of packets sizes is at least two, i.e.,  
	\begin{align}
	K=\nonumber |{\displaystyle \{x\in \mathbb {R} :f(x)>0\}}| \geq 2.
	\end{align}
	\noindent where $|\cdot|$ denotes cardinality of a set. 
	\item Assumption 2: The packet sizes are i.i.d. and with pmf $f(x)$ with an arbitrary support, and the number of packet sizes is at least two.
	\item Assumption 3: The packet sizes might be dependent with joint pmf $f(x_1,\ldots,x_n)$.
\end{itemize}
For the case of i.i.d. packet sizes (Assumptions~1 and~2), we denote the mean and the variance of packet sizes by $\mu$ and $\sigma^2$, respectively, where $\sigma^2<\infty$. We assume that Alice, Willie, and Bob know the pmf of the packet sizes.

\subsection{Hypothesis Testing}
Willie is faced with two hypotheses:
\begin{itemize}
\item $H_0$: Alice does not insert her own bits (null hypothesis)
\item $H_1$: Alice inserts her own bits (alternative hypothesis)
\end{itemize}
He applies a binary hypothesis test to decide between $H_0$ and $H_1$. Denote by $\mathbb{P}_{0}$ and $\mathbb{P}_{1}$ the distributions that Willie observes under $H_0$ and $H_1$, respectively. Willie's detection is associated with two errors:
\begin{itemize}
	\item $\mathbb{P}_{FA}$: probability of detecting $H_1$ when $H_0$ is true (type I error or false alarm) 
	\item $\mathbb{P}_{MD}$: probability of detecting $H_0$ when $H_1$ is true (type II error or missed detection)
\end{itemize}
We assume $\mathbb{P}(H_0)=\mathbb{P}(H_1)$ and that Willie seeks to minimize his probability of error 
$$\mathbb{P}_{e}={\mathbb{P}_{FA} \mathbb{P}(H_0)+ \mathbb{P}_{MD}}\mathbb{P}(H_1)=\frac{\mathbb{P}_{FA} + \mathbb{P}_{MD}}{2}.$$
\noindent Our results are readily extended to the case where $\mathbb{P}(H_0)\neq \mathbb{P}(H_1)$ since minimizing ${\mathbb{P}_{FA} + \mathbb{P}_{MD}}$ is applicable when $\mathbb{P}(H_0)\neq \mathbb{P}(H_1)$~\cite{soltani2018fundamental}.  

\subsection{Covertness}
Alice's transmission is {\em covert} if and only if she can achieve $\mathbb{P}_{e}\geq \frac{1}{2}-\epsilon$ for any $\epsilon>0$~\cite{bash_jsac2013}, which means Willie's detector operates as close as desired to a random detector.
In this paper, we use standard Big-O, little-omega, and Big-Theta notations~\cite[Ch. 3]{cormen2009introduction}.

\section{Assumption 1: Packet sizes are i.i.d. with a pmf whose support is a set of one bit spaced values} \label{sec:iid}

In this section, we assume that packet sizes are i.i.d. with pmf $f(x)$ whose support is a set of one bit spaced values across an interval $[S_{\min}, S_{\max}]$, and it allows at least two packet sizes. We determine the total number of bits that Alice can insert in the packets.   

\begin{thm} \label{thm:iid} 
If packet sizes are i.i.d. with a pmf whose support is a set of one bit spaced values, and there are at least two possible packet sizes, Alice can covertly insert $\mathcal{O}(\sqrt{n})$ bits in a sequence of $n$ packets transmitted by Jack. Conversely, if Alice inserts $\omega\left(\sqrt{n}\right)$ bits in a sequence of $n$ packets, Willie can detect her with arbitrarily small error probability $\mathbb{P}_{e}$.
\end{thm}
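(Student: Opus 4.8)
Fix $\epsilon>0$. The plan is to have Alice and Bob pre-share a secret key that marks a random set $\mathcal{K}\subseteq\{1,\dots,n\}$, including each index independently with probability $c(\epsilon)/\sqrt{n}$ for a constant $c(\epsilon)$ to be fixed last; then $|\mathcal{K}|=\Theta(\sqrt{n})$ with high probability and naming the marked indices costs $\mathcal{O}(\sqrt{n}\log n)$ key bits. For a marked packet of size $S$: if $S\le S_{\max}-\lfloor K/2\rfloor$, Alice appends $\lfloor K/2\rfloor$ message bits to the payload (the new size $S+\lfloor K/2\rfloor$ is still a legal value, since the support is every one-bit-spaced value on $[S_{\min},S_{\max}]$) and sets the flag bit to $1$; otherwise she sets the flag bit to $0$ and inserts nothing. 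Reading $\mathcal{K}$ and the flag bits, Bob locates and strips exactly the inserted bits, and since the link is noiseless the recovery is exact. Because $\lfloor K/2\rfloor\le K-1$, the probability that a packet has room is at least $f(S_{\min})>0$, so the number of bits conveyed concentrates around a positive multiple of $|\mathcal{K}|$, i.e.\ $\Theta(\sqrt{n})$.

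\textbf{Achievability (covertness).} The key point is that a marked packet's \emph{size} after the procedure depends only on whether it had room, not on the message content, so Willie's observation law under $H_1$ is the key-averaged mixture $\mathbb{P}_1=\mathbb{E}_{\mathcal{K}}[\mathbb{P}_1^{\mathcal{K}}]$, where conditioned on $\mathcal{K}$ the $n$ sizes are independent, distributed as $f$ off $\mathcal{K}$ and as a fixed pmf $g$ on $\mathcal{K}$ ($g$ being $f$ with the mass on $\{S_{\min},\dots,S_{\max}-\lfloor K/2\rfloor\}$ shifted up by $\lfloor K/2\rfloor$). Since $\mathrm{supp}(g)\subseteq\mathrm{supp}(f)$, the number $\rho:=\chi^2(g\,\|\,f)$ is a finite, strictly positive constant depending only on $f$. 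Introducing an independent copy $\mathcal{K}'$ of the key, swapping the $(\mathcal{K},\mathcal{K}')$-expectation with the expectation under $\mathbb{P}_0=f^{\otimes n}$, and using $\mathbb{E}_f[g(X)/f(X)]=1$ and $\mathbb{E}_f[(g(X)/f(X))^2]=1+\rho$ coordinatewise, I would obtain
\begin{align}
\chi^2(\mathbb{P}_1\,\|\,\mathbb{P}_0)=\mathbb{E}_{\mathcal{K},\mathcal{K}'}\!\left[(1+\rho)^{|\mathcal{K}\cap\mathcal{K}'|}\right]-1 .\nonumber
\end{align}
Because $|\mathcal{K}\cap\mathcal{K}'|\sim\mathrm{Binomial}(n,c(\epsilon)^2/n)$, the right-hand side equals $(1+c(\epsilon)^2\rho/n)^n-1\le e^{c(\epsilon)^2\rho}-1$. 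Choosing $c(\epsilon)$ small enough that $e^{c(\epsilon)^2\rho}-1\le 16\epsilon^2$ and using $\|\mathbb{P}_1-\mathbb{P}_0\|_{TV}\le\tfrac12\sqrt{\chi^2(\mathbb{P}_1\,\|\,\mathbb{P}_0)}$ yields $\mathbb{P}_{e}\ge\tfrac12-\tfrac12\|\mathbb{P}_1-\mathbb{P}_0\|_{TV}\ge\tfrac12-\epsilon$, so the scheme is covert while carrying $\Theta(\sqrt{n})$ bits.

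\textbf{Converse.} Here I would argue that any super-$\sqrt{n}$ insertion perturbs a single scalar statistic — the total number of transmitted bits — beyond its natural $\Theta(\sqrt{n})$ fluctuation. A legal packet has size in $[S_{\min},S_{\max}]$, so inserting $b$ bits into a payload raises that packet's size by exactly $b$ and forces $b\le S_{\max}-S_{\min}=K-1=\mathcal{O}(1)$ (a larger $b$ pushes the size above $S_{\max}$, which Willie sees immediately). Hence if Alice inserts $g(n)=\omega(\sqrt{n})$ bits in total, the sum of the $n$ observed sizes equals $\sum_{i=1}^{n}X_i+g(n)$, with $X_1,\dots,X_n$ the i.i.d.\ sizes Jack would have sent. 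Willie (who knows Alice's scheme) declares $H_1$ iff that sum exceeds $n\mu+\tau$ with $\tau:=\sqrt{g(n)\sqrt{n}}$, so that $\tau=\omega(\sqrt{n})$ while $\tau=o(g(n))$. Chebyshev gives $\mathbb{P}_{FA}\le n\sigma^2/\tau^2=\sigma^2\sqrt{n}/g(n)\to0$, and since $g(n)-\tau=\omega(\sqrt{n})$, $\mathbb{P}_{MD}=\mathbb{P}\!\left(\sum_i X_i\le n\mu-(g(n)-\tau)\right)\le n\sigma^2/(g(n)-\tau)^2\to0$; hence $\mathbb{P}_{e}=(\mathbb{P}_{FA}+\mathbb{P}_{MD})/2\to0$.

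\textbf{Main obstacle.} The delicate part is the covertness half of the achievability: turning the $\mathcal{O}(1)$ per-packet perturbation $g$ into an $n$-fold total-variation distance that stays below $2\epsilon$. A deterministic choice of the marked packets is detectable, so the random key is essential; and one must avoid bounding $D(\mathbb{P}_1\,\|\,\mathbb{P}_0)$ by a sum of per-coordinate Kullback--Leibler terms, since that bound grows like $\sqrt{n}$ and is vacuous. The $\chi^2$ computation above works precisely because it reduces the whole question to controlling the $\Theta(|\mathcal{K}|^2/n)=\Theta(1)$ expected overlap of two independent key draws. By contrast, the converse, resting on one sufficient statistic plus Chebyshev, and Bob's decoding over the noiseless link, are routine.
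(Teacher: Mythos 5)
Your proposal is correct, and your construction and converse are essentially the paper's: the paper also marks packets i.i.d.\ with probability $\Theta(1/\sqrt{n})$, appends $K/2$ bits when $S\leq S_{\max}-K/2$, uses the flag bit for Bob, and in the converse thresholds the empirical mean at $\mu+t$ with $t=\sqrt{\sigma^2/(\alpha n)}$, using Chebyshev for the false alarm and the weak law for the missed detection (your geometric-mean threshold $\sqrt{g(n)\sqrt{n}}$ is a minor variant that lets both errors vanish simultaneously). The covertness analysis, however, takes a genuinely different route. The paper observes that, because the marking is i.i.d.\ Bernoulli and independent of Jack's sizes, the key-averaged law under $H_1$ is \emph{itself} a product measure $\prod_i\widetilde{f}(x_i)$ with $\widetilde{f}=(1-p)f+pg$; the chain rule then gives $\mathcal{D}(\mathbb{P}_1\|\mathbb{P}_0)=n\,\mathcal{D}(\widetilde{f}\|f)$, and a direct computation shows $\mathcal{D}(\widetilde{f}\|f)\leq 2\xi^2p^2=2\epsilon^2/n$, whence $\mathbb{P}_e\geq\frac12-\epsilon$ via the KL version of the bound in \eqref{eq:5}. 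So the per-coordinate KL decomposition you warn against in your ``main obstacle'' paragraph is in fact the paper's method and is not vacuous: it is only vacuous if one conditions on the key \emph{before} decomposing, in which case each of the $\Theta(\sqrt{n})$ marked coordinates contributes $\Theta(1)$. Your $\chi^2$/second-moment computation over two independent key draws is heavier machinery but buys robustness — it never needs $\mathbb{P}_1$ to be a product, so it would survive, e.g., choosing exactly $\lceil c\sqrt{n}\rceil$ marked positions uniformly without replacement, where the averaged law is exchangeable but not i.i.d. The paper's approach buys a shorter calculation and an explicit constant, $p=\epsilon/(\xi\sqrt{n})$ with $\xi$ tied to the worst-case likelihood ratio $f(x-m)/f(x)$, in place of your implicitly chosen $c(\epsilon)$.
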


\begin{proof}

{\it (Achievability)} Denote by $K=S_{\max}-S_{\min}+1$ the size of the support of $f(x)$, i.e., the number of possible packet sizes. The construction and analysis for the case of odd $K$ follows from those of even $K$ with minor modifications. In particular, when $K$ is odd, Alice and Bob disregard all packets of a specific size (e.g., smaller packet size). We consider even $K$ in this proof.  

\textbf{Construction}: Let $m= K/2 $. Alice generates a secret key of length $\mathcal{O} (\sqrt{n} \log{n})$, to which Willie does not have access. In particular, she first generates an i.i.d. bit sequence of length $n$ in which each bit is one with probability 
\begin{align}
\label{eq:p} p=\frac{\epsilon}{\xi \sqrt{n}},
\end{align}
where
\begin{align}
\label{eq:xi} \xi&=\max\{1,\xi_0\},\\
\label{eq:xi0} \xi_0&=\max\limits_{x\in [S_{\max}-m+1,S_{\max}]}{\frac{f(x-m)}{f(x)}},
\end{align}
The locations of the ones indicate the packets that will be selected by Alice for a possible bit insertion. Hence, the key contains $\epsilon \sqrt{n}$ addresses of maximum length $\lceil \log n \rceil$ for a possible bit insertion.

Alice shares the key with Bob. The key indicates which packets in the stream of length $n$ packets from Jack have been selected by Alice for a possible bit insertion. If the size of the selected packet satisfies $S\leq S_{\max}-m$, then Alice inserts $m$ bits to the end of the payload of the packet and sets the flag bit of the packet to one; otherwise, she only changes the flag bit to zero indicating that the packet did not have available space. 

Bob, who has access to the key and the flag bits finds out which packets contain Alice's bits. Therefore, he extracts and removes the last $m$ bits of those packets.  

\textbf{Analysis}: {\em (Covertness)} If Willie applies hypothesis test, then \cite{bash_jsac2013}
\begin{align} \label{eq:5} \mathbb{P}_{e} \geq \frac{1}{2}- \sqrt{\frac{1}{2} \mathcal{D}(\mathbb{P}_1 || \mathbb{P}_0)},   
\end{align}
\noindent where $\mathbb{P}_1$ and $\mathbb{P}_0$ are the joint pmf of the packet sizes when $H_0$ and $H_1$ are true, respectively, and $\mathcal{D}(\mathbb{P}_1 || \mathbb{P}_0)$ is the relative entropy between $\mathbb{P}_1$ and $\mathbb{P}_0$. Next, we show that Alice can upper bound $\mathcal{D}(\mathbb{P}_1 || \mathbb{P}_0)$ by $2 \epsilon^2$, and thus lower bound Willies error probability, $\mathbb{P}_{e}$, by $1-\epsilon$, for arbitrary $\epsilon$. Recall that packet sizes are i.i.d. with pmf $f(x)$. Since the locations of the ones in the bit sequence are uniformly distributed, and thus Alice selects each packet for insertion independently, the distributions of the packet sizes remains i.i.d. after Alice's manipulation. Denote by $\widetilde{f}(x)$ the pmf of the packet sizes after Alice inserts information in them. Observe:
\begin{align}
\mathbb{P}_0&= \prod_{i=1}^{n}f(x_i),\\
\mathbb{P}_1&= \prod_{i=1}^{n}\widetilde{f}(x_i).
\end{align}
\noindent From the chain rule for relative entropy\cite[Eq.(2.67)]{cover2012elements}:
\begin{equation}\label{eq:6}
\mathcal{D}(\mathbb{P}_1 || \mathbb{P}_0) = n \mathcal{D}(\widetilde{f}(x) || f(x)).
\end{equation}
Next, we calculate $\widetilde{f}(x)$. Note that from Willie's perspective, the probability that a packet is selected by Alice for a possible insertion is $p$. Observe: 
\small
\begin{align}
\label{eq:fofx}   \widetilde{f}(x)= 
 \begin{cases} 
f(x)(1-p), & S_{\min}\leq x\leq S_{\min}+m-1 \\
 f(x)+pf(x-m). & S_{\min}+m\leq x\leq S_{\max}
\end{cases}
\end{align}
\normalsize
\noindent Therefore, 
\begin{align}
\nonumber \mathcal{D}(  \widetilde{f}(x)|| f(x))&= \sum_{x=S_{\min}}^{S_{\min}+m-1} \widetilde{f}(x) \ln \frac{\widetilde{f}(x)}{f(x)},\\
\label{eq:d1} &\phantom{=}+ \sum_{x=S_{\min}+m}^{S_{\max}} \widetilde{f}(x) \ln \frac{\widetilde{f}(x)}{f(x)},
\end{align}
Consider the first term on the right hand side (RHS) of~\eqref{eq:d1}. By~\eqref{eq:fofx}:
\begin{align}
\nonumber \sum_{x=S_{\min}}^{S_{\min}+m-1} \widetilde{f}(x) \ln \frac{\widetilde{f}(x)}{f(x)}&=\sum_{x=S_{\min}}^{S_{\min}+m-1} f(x)(1-p)  \ln(1-p),\\ 
\nonumber &=  (1-p) \ln(1-p) F (S_{\min}+m-1),\\
\nonumber  &\leq (1-p) \ln(1-p) \\
\label{eq:d11} &\leq (\xi-\xi p) \ln(\xi-\xi p),
\end{align}
\noindent where $F(\cdot)$ denotes the cumulative distribution function (CDF) of packet sizes, and the last step is true since $\xi\geq 1$ following from~\eqref{eq:xi}.

Consider the second term on the RHS of~\eqref{eq:d1} . Substituting $\widetilde{f}(x)$ from~\eqref{eq:fofx} yields:
\begin{align}
\nonumber &\sum_{x=S_{\min}+m}^{S_{\max}} \widetilde{f}(x)  \ln\frac{\widetilde{f}(x)}{f(x)}\\
\nonumber &= \sum_{x=S_{\min}+m}^{S_{\max}}  (f(x)+p f(x-m)) \ln\frac{f(x)+p f(x-m)}{f(x)},\\ 
\nonumber  &\stackrel{(a)}{\leq} \sum_{x=S_{\min}+m}^{S_{\max}}  f(x)(1+p \xi_0) \ln({1+p \xi_0}),\\ 
\nonumber  &\stackrel{(b)}{\leq} \sum_{x=S_{\min}+m}^{S_{\max}}  f(x)(\xi+p \xi) \ln({\xi+p \xi}),\\ 
\nonumber  &= ((\xi+p \xi) \ln({\xi+p \xi})) \sum_{x=S_{\min}+m}^{S_{\max}}  f(x),\\ 
\nonumber  &= ((\xi+p \xi) \ln({\xi+p \xi})) (1-F(S_{\min}+m-1)),\\ 
\label{eq:d12}  &\leq  (\xi+\xi p) \ln(\xi+\xi p).
\end{align}
\noindent where $(a)$ follows from~\eqref{eq:xi0}, and $(b)$ follows from~\eqref{eq:xi}. By~\eqref{eq:d1},~\eqref{eq:d11}, and~\eqref{eq:d12},
\begin{align} \label{eq:7}
\nonumber \mathcal{D}( \widetilde{f}(x)|| f(x)) &\leq   (\xi+\xi p) \ln(\xi+\xi p) + (\xi-\xi p) \ln(\xi-\xi p),\\
\nonumber &\stackrel{(c)}{\leq} (\xi+\xi p)(\xi p) + (\xi+\xi p)(-\xi p)=2\xi^2 p^2\\
&= \frac{2 \epsilon^2}{n} 
\end{align}
\noindent where $(c)$ is true since $\ln(x)\leq x-1$  for $x>0$, and the last step follows from substituting the value of $p$ given in~\eqref{eq:p}. By~\eqref{eq:5},~\eqref{eq:6},~\eqref{eq:7} $\mathbb{P}_{e} \geq \frac{1}{2}- \epsilon$, and thus Alice's insertion is covert.

 {\em (Number of bits)} The total number of inserted bits in the stream of $n$ packets transmitted by Jack is
 \begin{align}
 \nonumber  n_c=\sum_{i=1}^{ n } n_c(i),
 \end{align}
 where $n_c(i)$ is the number of bits inserted in the $i^{\mathrm{th}}$ packet form Jack. Recall that the key is generated independent of the packet stream. Therefore, Alice inserts $m$ bits in a packet if two independent events occur, the key selects the packet (with probability $ p=\frac{\epsilon}{\xi \sqrt{n}}$) and the size of the packet satisfies $S\leq S_{\max}-m$ (with probability $F(S_{\max}-m)$). Let $L=F(S_{\max}-m)$. Observe:
 \begin{align}
 \nonumber \mathbb{E}[n_c(i)] &= L p m. 
 \end{align}
 Note that $n_c(i)$s are i.i.d. since the locations of ones in the key are i.i.d. The law of large numbers yields:  
 \begin{align}
 \nonumber \lim\limits_{n \to \infty} \mathbb{P}\left(n_c> \frac{L m  pn}{2}\right)&= \lim\limits_{n \to \infty}\mathbb{P}\left(\frac{1}{n}\sum_{i=1}^{n} n_c(i)> \frac{L m p }{2}\right)\\ 
 &=1
 \end{align} 
Since $ p=\frac{\epsilon}{\xi \sqrt{n}}$ (given in~\eqref{eq:p}), 
\begin{align}
\mathbb{P}\left(n_c> \frac{L m   \epsilon \sqrt{n}}{2 \xi}\right) \to 1 \text{ as } n \to \infty. 
\end{align}
Hence, Alice can insert $\mathcal{O}(\sqrt{n})$ bits in a sequence of $n$ packets from Jack. 
 
{\it (Converse)} Willie uses a detector that is sufficient to limit Alice's bit insertion in the packets across all potential schemes. Suppose that Willie observes a packet sequence of length $n$ and wishes to detect whether Alice inserts information in the packets or not. Since he knows that packet sizes are i.i.d. with pmf $f(x)$, he knows the expected size of packets. Recall that the mean and variance of packet sizes are $\mu< \infty$ and $\sigma^2<\infty$, respectively. Hence, he calculates the average size of packets $U$ and performs a hypothesis test by setting a threshold $t$:
\begin{align}
U\underset{H_0}{\overset{H_1}{\gtrless}} \mu + t.
\end{align}
Consider $\mathbb{P}_{FA}$
\begin{align}
\nonumber \mathbb{P}_{FA}=\mathbb{P}(U>\mu+t|H_0) &= \mathbb{P}(U-\mu>t|H_0) ,\\
\nonumber &\leq \mathbb{P}(|U-\mu|>t|H_0) \leq \frac{\sigma^2}{n t^2}. 
\end{align}
\noindent where the last step follows from the Chebyshev’s inequality. Since $\sigma^2< \infty$, if Willie sets $t=\sqrt{\frac{\sigma^2}{\alpha n}}$, he can achieve arbitrary small probability of false alarm, i.e., 
\begin{align}
\label{eq:fa1}\mathbb{P}_{FA}  \leq \alpha,
\end{align}
for arbitrary $\alpha$. 

Next, we will show that if Alice inserts total number of $\omega(\sqrt{n})$ bits, Willie can achieve arbitrary small probability of missed detection as well, i.e., $\mathbb{P}_{MD}\leq \beta$ for arbitrary $\beta>0$. Assume Alice inserts $n_c$ bits of information. Therefore, 
\begin{align}
\nonumber \mathbb{P}_{MD}&=\mathbb{P}(U \leq \mu+t|H_1),\\
\nonumber &= \mathbb{P}\left(\frac{1}{n}\sum_{i=1}^{n} S_i + \frac{n_c}{n}\leq \mu+t\right),\\
\nonumber &= \mathbb{P}\left(\frac{1}{n}\sum_{i=1}^{n} S_i -\mu \leq t-\frac{n_c}{n}\right).
\end{align}
\noindent Since $t=\sqrt{\frac{\sigma^2}{\alpha n}}$, if $n_c=\omega({\sqrt{n}})$, for large enough $n$, $t-\frac{n_c}{n}<0$, and thus the WLLN yields 
\begin{align}
\nonumber \lim\limits_{n \to \infty}\mathbb{P}_{MD}& = 0.
\end{align}
\noindent Therefore, if Alice inserts total number of $\omega(\sqrt{n})$ bits, Willie can achieve $\mathbb{P}_{MD} < \beta$ for any $\beta>0$. Combined with~\eqref{eq:fa1}, if Alice inserts $\omega\left(\sqrt{n}\right)$ bits, Willie can choose a $t=\sqrt{\frac{\sigma^2}{ n \alpha}}$ to achieve any (small) $\alpha>0$ and $\beta>0$ desired.
\end{proof}

\section{Assumption 2: Packet sizes are i.i.d. with a pmf whose support is general} \label{sec:iid2}

In this section, we assume that packet sizes are i.i.d. with pmf $f(x)$ whose support is $S_1 < S_2<\ldots<S_K$, with $K\geq 2$. We determine the total number of bits that Alice can insert in the packets.   

\begin{thm} \label{thm:iid2} 
	If packet sizes are i.i.d. with pmf $f(x)$ that allows $K\geq 2$ possible sizes, Alice can covertly insert $\mathcal{O}(\sqrt{n})$ bits in a sequence of $n$ packets transmitted by Jack. Conversely, if Alice inserts $\omega\left(\sqrt{n}\right)$ bits in a sequence of $n$ packets, Willie can detect her with arbitrarily small error probability $\mathbb{P}_{e}$.
\end{thm}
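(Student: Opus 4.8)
The plan is to transcribe the proof of Theorem~\ref{thm:iid}, replacing the size--shift operation (which is only meaningful on a one--bit--spaced support) by ``promotion between two admissible sizes''. Since $K\geq 2$, fix the two smallest support values $S_1<S_2$ and set $m^\star=S_2-S_1\geq 1$. As before, Alice draws an i.i.d. length--$n$ bit sequence whose ones --- each present with probability $p=\epsilon/(\xi\sqrt n)$, where $\xi=\max\{1,\xi_0\}$ with now $\xi_0=f(S_1)/f(S_2)$ --- mark the packets eligible for insertion, and she shares their positions with Bob as a key of length $\mathcal O(\sqrt n\log n)$. When the key selects a packet, Alice inspects its size: if it equals $S_1$ she appends $m^\star$ information bits --- turning it into a size--$S_2$ packet --- and sets the flag bit to one; otherwise she leaves the payload alone and sets the flag bit to zero. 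Knowing the key and the flag bits, Bob recovers the message: a selected packet with flag one necessarily had original size $S_1$, so Bob strips its last $m^\star$ bits, while a selected packet with flag zero (and every unselected packet) is left untouched. The one slightly delicate requirement is that this promotion map be invertible for Bob and keep the observed size pmf close to $f$; both hold trivially because a single pair already suffices. If one wants the larger constant obtained by pairing up $(S_1,S_2),(S_3,S_4),\dots$ and promoting within each pair (with $\xi_0=\max_j f(S_{2j-1})/f(S_{2j})$, and $K$ odd handled by discarding one size as in Theorem~\ref{thm:iid}), one must additionally check that the map is injective and that sources and targets do not collide, which they do not.

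For covertness, selection is independent of the stream, so the post--insertion sizes are still i.i.d., with pmf $\widetilde f(S_1)=f(S_1)(1-p)$, $\widetilde f(S_2)=f(S_2)+p\,f(S_1)$, and $\widetilde f(S_i)=f(S_i)$ for $i\geq 3$; hence $\mathcal D(\mathbb P_1\|\mathbb P_0)=n\,\mathcal D(\widetilde f\|f)$ as in~\eqref{eq:6}. Bounding relative entropy by the $\chi^2$--divergence --- which follows from the same elementary inequality $\ln x\leq x-1$ used at step $(c)$ of~\eqref{eq:7} --- gives $\mathcal D(\widetilde f\|f)\leq \sum_x (\widetilde f(x)-f(x))^2/f(x)=p^2 f(S_1)\bigl(1+f(S_1)/f(S_2)\bigr)\leq 2\xi p^2 = 2\epsilon^2/(\xi n)\leq 2\epsilon^2/n$, so $\mathcal D(\mathbb P_1\|\mathbb P_0)\leq 2\epsilon^2$ and $\mathbb P_e\geq \tfrac12-\epsilon$ by~\eqref{eq:5}. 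This argument is insensitive to how large the gap $m^\star$ is: only $\Theta(\sqrt n)$ packets are touched, and each merely moves from the $S_1$--bin to the $S_2$--bin, shifting probability mass by $p$ regardless of $m^\star$. For the bit count, write $n_c=\sum_{i=1}^n n_c(i)$ with the $n_c(i)$ i.i.d., $n_c(i)=m^\star$ when packet $i$ is both selected and of size $S_1$ (probability $p\,f(S_1)$) and $0$ otherwise; the law--of--large--numbers/Chebyshev argument of Theorem~\ref{thm:iid} then yields $\mathbb P\bigl(n_c>\tfrac12 m^\star f(S_1)\,\epsilon\sqrt n/\xi\bigr)\to 1$, so Alice covertly inserts $\Theta(\sqrt n)=\mathcal O(\sqrt n)$ bits.

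The converse is verbatim that of Theorem~\ref{thm:iid} and uses no property of the support: since every inserted bit increases the total byte count by one, under $H_1$ the empirical mean size is $U=\tfrac1n\sum_i S_i+n_c/n$, so Willie thresholds $U$ at $\mu+t$ with $t=\sqrt{\sigma^2/(\alpha n)}$; Chebyshev gives $\mathbb P_{FA}\leq\alpha$, and if $n_c=\omega(\sqrt n)$ then $t-n_c/n<0$ for large $n$, whence the WLLN forces $\mathbb P_{MD}\to 0$, so $\omega(\sqrt n)$ inserted bits are detectable with arbitrarily small $\mathbb P_e$. In short, the whole argument is a relabeling of Theorem~\ref{thm:iid}; the only genuinely new ingredient is the choice of the promotion map, and the main (mild) obstacle is simply verifying its invertibility for Bob --- which, because a single admissible pair already does the job, is immediate.
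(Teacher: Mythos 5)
Your proposal is correct, and it follows the same route as the paper: key-selected packets at rate $p=\epsilon/(\xi\sqrt n)$, an invertible size-promotion map signalled by the flag bit, the chain rule reducing $\mathcal D(\mathbb P_1\|\mathbb P_0)$ to $n$ times a single-letter divergence of order $1/n$, the law of large numbers for the bit count, and the mean-threshold/Chebyshev converse. The only substantive difference is the promotion map itself: the paper shifts every size in the lower half of the support upward by $K/2$ positions ($S_i\mapsto S_{i+K/2}$ for $i\le K/2$), so each selected packet with available space receives at least $K/2$ bits, whereas your primary construction promotes only $S_1\mapsto S_2$ and inserts $m^\star=S_2-S_1$ bits; both maps are injective with disjoint source and target sets, and both give $\Theta(\sqrt n)$ total, the paper's just with a larger constant. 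Your explicit bound $\mathcal D(\widetilde f\|f)\le\chi^2(\widetilde f\|f)=p^2f(S_1)(1+\xi_0)\le 2\epsilon^2/n$ is also a slightly cleaner way to carry out the divergence computation than the paper's term-by-term estimate, which it merely asserts carries over ``with minor modifications.''
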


\begin{proof} The construction and analysis follows from those of Theorem~\ref{thm:iid} with minor modifications. Alice generates a secret key of length $\mathcal{O} (\sqrt{n} \log{n})$ and shares it with Bob. According to the key, Alice selects packets for insertion. Let $S_i\in \{S_1,\ldots,S_K\}$ be the size of the packet selected by Alice according to the key. Since the proof for odd $K$ follows from that of even case with disregarding one possible packet size, here we assume $K$ is even. If $i\leq K/2$, then Alice adds $S_{i+ K/2 }- S_i$ bits to the end of the payload of the packet so that the size of the packet will be $S_{i+ K/2 }$, and she sets the flag bit to one; else, she only changes the flag bit to zero. Note that in this case, since $S_i-S_{i'<i}\geq 1$, Alice adds at least $ K/2 $ bits to each packets which is lower bounded by that of Theorem~\ref{thm:iid}. Bob employs the key and the flag bits to extract and remove Alice's bits from the packets. 
\end{proof}

\section{Assumption 3: Dependent Packet Sizes} \label{sec:niid}

In this section, we assume that packet sizes may be dependent. We determine the total number of bits that Alice can insert in the packets.   

\begin{thm} \label{thm:niid} 
	If the packet sizes are dependent, Alice can covertly insert on average $\mathcal{O}(c(n)/\sqrt{n})$ bits in a sequence of $n$ packets transmitted by Jack, where $c(n)$ is the average number of conditional pmfs of packet sizes, $f(x_i|x_{i-1},\ldots,x_1)$s, that have a support of minimum size two, i.e.,
\begin{align}
\label{eq:c}	c(n)=\sum_{i=1}^{n} \mathbb{E} \left[\mathbb{I}_{\{K_i\geq 2\}}\right],
	\end{align}
where $\mathbb{I}$ denotes the indicator function, and $K_i$ is the size of the support of $f(x_i|x_{i-1},x_{i-2},\ldots,x_1)$.
\end{thm}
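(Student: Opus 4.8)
The plan is to adapt the construction from Theorem~\ref{thm:iid} to the dependent case, where the key subtlety is that only those packets whose conditional size distribution has support of size at least two can be safely manipulated. First I would set up the construction: Alice generates a secret key of length $\mathcal{O}(c(n)\log n / \sqrt{n} \cdot \log n)$ — more precisely, she generates an i.i.d. bit sequence of length $n$ where each bit is one with probability $p = \epsilon/(\xi\sqrt{n})$ for an appropriate constant $\xi$ depending on the conditional pmfs, and the ones mark the packets selected for possible insertion. When the $i^{\mathrm{th}}$ packet is selected, Alice looks at $K_i$, the support size of $f(x_i \mid x_{i-1},\ldots,x_1)$; if $K_i \geq 2$ and the realized size lies in the ``lower half'' of that conditional support, she pads the packet up to a corresponding ``upper half'' size and sets the flag bit to one, otherwise she just sets the flag bit to zero. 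If $K_i < 2$ (the conditional size is deterministic), she does nothing, since any padding would be immediately detectable. Bob uses the key and the flag bits, together with his knowledge of the joint pmf and the already-recovered earlier packet sizes, to reconstruct each $K_i$ and the relevant size map, and thus to extract and strip Alice's bits.

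Next I would handle covertness via the chain rule for relative entropy. Writing $\mathbb{P}_0$ and $\mathbb{P}_1$ for the joint pmfs of the observed sequence under $H_0$ and $H_1$, the chain rule gives
\begin{align}
\nonumber \mathcal{D}(\mathbb{P}_1 \| \mathbb{P}_0) = \sum_{i=1}^{n} \mathbb{E}\left[\mathcal{D}\left(\widetilde{f}(x_i \mid x^{i-1}) \,\big\|\, f(x_i \mid x^{i-1})\right)\right],
\end{align}
where the expectation is over the earlier coordinates $x^{i-1}$ drawn under $\mathbb{P}_1$, and $\widetilde f(\cdot\mid x^{i-1})$ is the post-manipulation conditional pmf. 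The point is that for any history with $K_i < 2$ the corresponding term vanishes (no manipulation), and for histories with $K_i \geq 2$ the conditional manipulation is exactly the one-bit-spaced (or general-support, à la Theorem~\ref{thm:iid2}) insertion analyzed before, so that term is bounded by $2\xi^2 p^2$ just as in~\eqref{eq:7}, by the same $\ln(x)\leq x-1$ argument. Summing, $\mathcal{D}(\mathbb{P}_1\|\mathbb{P}_0) \leq 2\xi^2 p^2 \sum_{i=1}^n \mathbb{E}[\mathbb{I}_{\{K_i\geq 2\}}] = 2\xi^2 p^2 c(n) \leq 2\epsilon^2 c(n)/n$. To make this $\mathcal{O}(\epsilon^2)$ one rescales $p$ to $p = \epsilon/(\xi\sqrt{c(n)})$ when $c(n) = \Theta(n)$, or more carefully one keeps track of the $c(n)$ factor; then~\eqref{eq:5} gives $\mathbb{P}_e \geq \tfrac12 - \epsilon$. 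For the bit count, $\mathbb{E}[n_c(i)] \geq p\cdot (K_i/2)\cdot(\text{available-space probability}) \geq \tfrac{p}{2}\,L\,\mathbb{I}_{\{K_i\geq 2\}}$ in expectation over the history, so $\mathbb{E}[n_c] = \sum_i \mathbb{E}[n_c(i)] = \Theta(p\, c(n)) = \Theta(c(n)/\sqrt{n})$ with the original scaling of $p$; a concentration argument (WLLN or a martingale/Azuma bound, since the $n_c(i)$ need not be independent here) upgrades this to a high-probability statement.

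I expect the main obstacle to be precisely this last point — establishing that $n_c$ concentrates around its mean when the packet sizes, and hence the per-packet insertion amounts $n_c(i)$, are dependent. In the i.i.d. case Theorem~\ref{thm:iid} could invoke the weak law of large numbers directly; here I would instead note that conditioned on the key, the selection indicators are independent of the packet stream, and write $n_c$ as a sum whose conditional expectations form a nice structure, then apply a Chernoff/Azuma-type inequality for martingale differences (the increments are bounded since each packet can absorb at most $S_{\max} - S_{\min}$ bits). A secondary subtlety worth flagging is the precise normalization of the result: the theorem states $\mathcal{O}(c(n)/\sqrt{n})$ bits \emph{on average}, so it may suffice to prove the expectation bound and the matching converse in expectation, sidestepping the concentration issue; the converse would mirror the one in Theorem~\ref{thm:iid}, using that Willie compares the empirical average packet size against $\mu = \mathbb{E}[\frac{1}{n}\sum_i S_i]$ with a Chebyshev-type threshold, now justified by $\mathrm{Var}(\frac{1}{n}\sum_i S_i) = o(1)$ under a mild mixing/bounded-correlation assumption on the dependent sizes.
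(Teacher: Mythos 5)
Your proposal matches the paper's proof in all essentials: the same key-based selection with $p=\Theta(1/\sqrt{n})$ and a normalizing constant built from the conditional supports, the same padding of lower-half conditional-support sizes to upper-half sizes with a flag bit, the same chain-rule decomposition bounding each nonzero term by $2\epsilon^2/n$ so that $\mathcal{D}(\mathbb{P}_1\|\mathbb{P}_0)\leq 2\epsilon^2 c(n)/n\leq 2\epsilon^2$ (no rescaling of $p$ is needed since $c(n)\leq n$), and the same expected-bit-count computation $\mathbb{E}[n_c]\geq p\,c(n)=\Theta(c(n)/\sqrt{n})$. The concentration argument and converse you flag as the main obstacles are in fact not part of the theorem -- it claims only on-average achievability, and the paper proves exactly the expectation bound and stops there.
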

 
\begin{proof}
	
	{\it (Achievability)} 
	
	\textbf{Construction}: The construction is similar to that of Theorem~\ref{thm:iid} except that the number of bits that Alice adds to the $i^{\mathrm{th}}$ packet from Jack depends on $K_i$. If $K_i$ is odd, Alice and Bob disregard packets with the smallest possible size. Alice generates a secret key of length $\mathcal{O}( \sqrt{n} \log{n})$ indicating the packets in the stream of length $n$ transmitted by Jack have to be selected by Alice for a possible bit insertion. She shares the key with Bob; however, Willie does not have access to it. To generate the key, she first generates an i.i.d. bit sequence of length $n$ in which each bit is one with probability 
\begin{align}
\label{eq:p2}  p'=\frac{\epsilon}{\eta \sqrt{n}},
\end{align}
where
\begin{align}
\label{eq:eta} \eta&=\max\{1,\eta_0\},\\
\label{eq:eta0} \eta_0&=\max\limits_{1 \leq i\leq n}{\frac{\max \mathcal{A}_i}{\min \mathcal{A}_i}},
\end{align}
$\mathcal{A}_i$ is the set of all possible sizes for the $i^{\mathrm{th}}$ for all possible  instantiations of sizes of $1^{\mathrm{st}},2^{\mathrm{nd}},\ldots,(i-1)^{\mathrm{th}}$ packets. The locations of the ones indicate the packets that will be selected by Alice for a possible bit insertion.	

If the key indicates that the $i^{\mathrm{th}}$ packet has to be selected by Alice for a possible bit insertion, $S_{1,i}<\ldots<S_{K_i,i}$ is the support of $f(x_i|x_{i-1},x_{i-2},\ldots,x_1)$, and $S_{j,i}$ is the size of the $i^{\mathrm{th}}$ packet from Jack, then Alice adds $S_{j+\lfloor K_i/2 \rfloor,i}- S_{j,i}$ bits to the end of the payload of the packet so that the size of the packet will be $S_{j+\lfloor K_i/2 \rfloor,i}$, and she sets the flag bit to one; else, she only changes the flag bit to zero.

Bob, who has access to the key and the flag bits finds out which packets contain Alice's bits. Therefore, he extracts and removes the last $m$ bits of those packets.  
	
	\textbf{Analysis}: {\em (Covertness)} Observe:
	\begin{align}
	\mathbb{P}_0&= \prod_{i=1}^{n}f(x_i|x_{i-1},\ldots,x_1),\\
	\mathbb{P}_1&= \prod_{i=1}^{n}\widetilde{f}(x_i|x_{i-1},\ldots,x_1).
	\end{align}
	\noindent From the chain rule for relative entropy\cite[Eq.(2.67)]{cover2012elements}:
	\begin{equation}\label{eq:61}
	\mathcal{D}(\mathbb{P}_1 || \mathbb{P}_0) = \sum_{i=1}^{n} \mathcal{D}(\widetilde{f}(x_i|x_{i-1},\ldots,x_1) || f(x_i|x_{i-1},\ldots,x_1)).
	\end{equation}
	Similar to the arguments leading to~\eqref{eq:fofx}, we can show that if $K_i\geq 2$, then \eqref{eq:p2} yields:
	\begin{align} \label{eq:72}
	 \mathcal{D}(\widetilde{f}(x_i|x_{i-1},\ldots,x_1) || f(x_i|x_{i-1},\ldots,x_1)) &\leq \frac{2 \epsilon^2}{n}. 
	\end{align}
	\noindent By~\eqref{eq:5},~\eqref{eq:61},~\eqref{eq:72}, $\mathbb{P}_{e} \geq \frac{1}{2}- \epsilon$, and thus Alice's bit insertion is covert. 
	
	{\em (Number of bits)} Recall that $K_i$ is size of the support of $f(x_i|x_{i-1},\ldots,x_1)$, and $W(i)=1$ if the key selects the $i^{\mathrm{th}}$ packet for a possible insertion by Alice. Alice inserts bits in the $i^{\mathrm{th}}$ packet from Jack if two independent events occur, $W(i)=1$ and $K_i\geq 1$. Thus, the total number of bits that Alice inserts is
	\begin{align}
	\nonumber  n_c=\sum_{i=1}^{n} \lfloor K_i/2\rfloor\mathbb{I}_{\{K_i\geq 2\}}  \mathbb{I}_{\{W(i)=1\}}&\geq \sum_{i=1}^{n} \mathbb{I}_{\{K_i\geq 2\}}  \mathbb{I}_{\{W(i)=1\}}
	\end{align}
	where the last step is true since if $K_i\geq 2$, then $\lfloor K_i/2\rfloor \geq 1$. In oter words, if Alice inserts bits in a packet, she inserts at least one bit in it. Consequently, 
	\begin{align}
	\nonumber \mathbb{E} [n_c]&\geq \sum_{i=1}^{n} \mathbb{E} \left[\mathbb{I}_{\{K_i\geq 2\}}  \mathbb{I}_{\{W(i)=1\}}\right],\\
	&\stackrel{(d)}{=} \sum_{i=1}^{n} \mathbb{E} \left[\mathbb{I}_{\{K_i\geq 2\}}\right]  p= p c(n),
	\end{align}
	\noindent where $(d)$ is true since $K_i$ is independent of $W(i)$ and $\mathbb{E}[\mathbb{I}_{\{W(i)=1\}}]=p'$. Because $p'=\frac{\epsilon}{\eta \sqrt{n}}$, Alice can insert on average $\mathcal{O}(c(n)/\sqrt{n})$ bits in a sequence of $n$ packets from Jack. 
\end{proof}

\section{Discussion}\label{dis}
\subsection{Use of Other Techniques for Covert Communication}
In this paper, we allowed Alice to hold packets to add additional bits of information to their payloads. This may allow her to release the packets at specific times to embed information in the inter-packet delays and establish a timing channel which results in a higher throughput~\cite{soltani2015covert,soltani2016allerton,soltani2017towards}. However, this paper only focuses on establishing a covert communication via insertion of bits in packets, and everything else that Alice does besides this, such as timing channel and packet insertion~\cite{soltani2015covert,soltani2016allerton} is orthogonal to this work. Furthermore, Alice cannot alter the information in Jack's packets to embed her information since if she does such, Steve may realize this and punish her. 

\subsection{Assumption of variable packet lengths} \label{dis2}
Our analysis and results relied on the assumption that the packet sizes are variable so Alice can insert her information in the payloads of the packets and increase the packet sizes. Although in some scenarios the packet sizes are almost deterministic~\cite[Section III.A]{zhang2014network}, there are many scenarios with variable packet lengths~\cite{garsva2015packet,farber2002network,salvador2004modeling} where a covert channel may be established by embedding information in the packet sizes~\cite{zhang2014network,yao2008coverting,ji2009novel,nair2011length,ji2009normal}. Besides, audio and video streaming applications with variable bit rate codecs (e.g., Skype) also might involve the transmission of variable packet sizes.  

\subsection{Recovering the original size of the packets}
In this paper, the mapping between the old sizes of the packets and new sizes of the packets is such that Bob can easily find the original size of each packet if it has information from Alice. Since he also knows that the bits are added to the end of the payload, he can extract Alice's bits. Considering Assumption 1, we discuss an alternative mapping: if the key selects a packet for a possible bit insertion by Alice, then Alice uses all of the available space in the packet. Since the mapping does not preserve the original sizes of the manipulated packets, Alice has to allocate some of the added bits to indicate the original size of the packets. We can show that we achieve similar order results for Theorems~\ref{thm:iid},\ref{thm:iid2}, and~\ref{thm:niid}.

\subsection{Alternative uses of key}
We assumed that Alice and Bob share a secret key that is unknown to Willie, and that Alice can use a flag bit in the header of each packet to indicate if she inserted any bits in a packet that is selected by the key. Here we discuss two alternative schemes. In the first scheme, Alice generates a key of size $\mathcal{O}(1)$ and shares it with Bob. If she inserts bits in a packet, she inserts the key in the packet. Bob finds the packets in which the key exists and extracts Alice's information. Because the key can appear in the payload randomly without Alice inserting it, one has to analyze the probability of such an error for Bob.

In the second scheme, instead of using a fixed key for each packet, Alice employs non-deterministic encryption to create a long secret key, then she slices it into pieces and inserts each piece in each packet that has Alice's bits. In this case, the size of the key grows as $n$ grows and it is efficient only in a model where the size of the packets scales as some function of $n$.

\subsection{Covertness of flag bits}
Since Willie cannot see the content of the packets, he cannot see the flag bits. We consider Assumption 1 and we show that even if he observes the flag bits, Alice's bit insertion is covert. We assume that if Alice does not insert bits, the flag bits are i.i.d. instantiations of a Bernoulli random variable with parameter $1/2$. Let $\mathbb{Q}_1$ and $\mathbb{Q}_0$ be the joint pmf of packet sizes and flag bits under $H_1$ and $H_0$, respectively. If Willie applies hypothesis test, then \cite{bash_jsac2013}
\begin{align} \label{eq:9} \mathbb{P}_{e} \geq \frac{1}{2}- \sqrt{\frac{1}{2} \mathcal{D}(\mathbb{Q}_1 || \mathbb{Q}_0)},   
\end{align}
where 
\begin{align}
\nonumber \mathbb{Q}_0&= \prod_{i=1}^{n}h(x_i,y_i),\\
\nonumber \mathbb{Q}_1&= \prod_{i=1}^{n}\widetilde{h}(x_i,y_i),
\end{align}
and $h(x,y)$ and $\tilde{h}(x,y)$ are the join pmf of packet size and flag bit under $H_1$ and $H_0$, respectively. From the chain rule for relative entropy \cite[Eq.(2.67)]{cover2012elements}:
\begin{align}\label{eq:10}
\mathcal{D}(\mathbb{Q}_1 || \mathbb{Q}_0) &= n \mathcal{D}(\widetilde{h}(x,y) || h(x,y)).
\end{align}
Recall that the pmf of packet sizes under $H_1$ and $H_0$ are $\tilde{f}(x)$ and $f(x)$, respectively. Since the size of a packet is independent of its flag bit when $H_0$ is true, $h(x,y)=f(x)B_{0.5}(y)$, where $B_{0.5}(y)$ is a Bernoulli pmf of parameter $1/2$. Let $\tilde{h}_{Y|X}(x,y)$ be the conditional pmf of the flag bits given the size of the packet under $H_1$. The chain rule of relative entropy yields $\mathcal{D}(\widetilde{h}(x,y) || h(x,y))= \mathcal{D}(\widetilde{f}(x) || f(x)) + \mathcal{D}(\widetilde{h}_{Y|X}(y,x) || B_{0.5}(y))$. Combined with~\eqref{eq:10}:
\begin{align}\label{eq:11}
\mathcal{D}(\mathbb{Q}_1 || \mathbb{Q}_0) &= n \mathcal{D}(\widetilde{f}(x) || f(x))+ n \mathcal{D}(\tilde{h}_{Y|X}(y,x) || B_{0.5}(y)).
\end{align}
Consider the first term on the RHS of~\eqref{eq:11}. Similar to the arguments leading to~\eqref{eq:7}, we can show that $ \mathcal{D}( \widetilde{f}(x)|| f(x)) \leq \frac{ \epsilon^2}{n}$. Thus, 
\begin{align} \label{eq:12}
n \mathcal{D}( \widetilde{f}(x)|| f(x)) &\leq  \epsilon^2.
\end{align}
\noindent Next, we derive $\tilde{h}_{Y|X}(y,x)$ and upper bound the RHS of~\eqref{eq:11}. Similar to the analysis of Theorem~\ref{thm:iid}, we assume that the number of possible sizes for each packet ($K$) is odd, as the analysis for even $K$ follows from it readily. By~\eqref{eq:fofx} and the Bayes' rule, we can show that for $S_{\min}\leq x \leq S_{\min}+m-1$, 
\begin{align}
\label{eq:13} \tilde{h}_{Y|X}(y=0,x) = \mathbb{P}(y=0|x)=\frac{0.5 f(x)(1-p)}{ f(x)(1-p)} = 0.5,
\end{align} 
and for $S_{\min}+m\leq x \leq S_{\max}$,
\begin{align}
\label{eq:14}
\tilde{h}_{Y|X}(y=0,x) = \mathbb{P}(y=0|x)=\frac{0.5 f(x)(1-p)}{ f(x)(1-p)+f(x-m)p}.
\end{align}
Consider the second term on the RHS of~\eqref{eq:11}. Observe: 
\begin{align}
\nonumber  &\mathcal{D}(\tilde{h}_{Y|X}(y,x) || B_{0.5}(y))= 
\\
\label{eq:18}&\sum_{x=S_{\min}+m}^{S_{\max}} \tilde{f}(x) \sum_{y=0}^{1} \tilde{h}_{Y|X}(y,x) \log{\frac{\tilde{h}_{Y|X}(y,x)}{2}}.
\end{align}
\noindent Note that the second summation on the RHS of~\eqref{eq:18} is only over $S_{\min}+m\leq x\leq x=S_{\max}$ because for $S_{\min}\leq x \leq S_{\min}+m-1$,~\eqref{eq:13} implies that $\tilde{h}_{Y|X}(y,x)$ is $B_{0.5}(x)$. We show in Appendix that
\begin{align}
\label{eq:16}\sum_{y=0}^{1} \tilde{h}_{Y|X}(y,x) \log{\frac{\tilde{h}_{Y|X}(y,x)}{2}}
\leq \left(\frac{ \xi_0 p}{ 1-p}\right)^2.
\end{align} 
Consequently,~\eqref{eq:18} yields:
\begin{align}
\nonumber  \mathcal{D}(\tilde{h}_{Y|X}(y,x) || B_{0.5}(y))&\leq  \left(\frac{ \xi_0 p}{ 1-p}\right)^2 \sum_{x=S_{\min}+m}^{S_{\max}} \tilde{f}(x),\\ 
\label{eq:19} &\leq \left(\frac{ \xi_0 p}{ 1-p}\right)^2,
\end{align}
where the last step is true since the summation in~\eqref{eq:19} is the probability that the packet size satisfies $S_{\min}+m\leq x \leq S_{\max}$ under $H_1$. Substituting the value of $p$ from~\eqref{eq:p} and~\eqref{eq:19} yield: 
\begin{align}
\label{eq:20} \lim_{n \to \infty} n  \mathcal{D}(\tilde{h}_{Y|X}(y,x) || B_{0.5}(y)) \leq   { \xi_0^2 \frac{\epsilon^2}{\xi^2}} \leq \epsilon^2, 
\end{align}
where the last step follows from~\eqref{eq:xi}. By~\eqref{eq:9},~\eqref{eq:11},~\eqref{eq:12}, and~\eqref{eq:20}, $\mathbb{P}_{e} \geq \frac{1}{2}-\epsilon$ for all $\epsilon$ as $n \to \infty$, and thus Alice's insertion is covert, even if Willie observes the flag bits.  

\subsection{Avoiding the use of flag bits}
The use of the flag bit~\cite{fisk2002eliminating,ahsan2002practical,shah2011network,cauich2005data,rowland1997covert,handel1996hiding}) are possible in various protocols where random-looking fields in the headers are generated (e.g., initial sequence numbers (ISN) in TCP, and IPID in IP). Considering Assumption 1, we discuss two alternative scheme where Alice and Bob do not use the flag bits. 
In the first scheme, if the key selects a packet, and the size of the selected packet satisfies $S\leq S_{\max}-m$, then Alice inserts $m$ bits to the end of the payload of the packet; otherwise, she removes and stores $S_{\max}-m$ bits from the payload of the packet, and for bit insertion, she gives the highest priority to the oldest Jack's bits in her buffer. She continues this until she transmits Jack's last packet. Note that the mapping is one to one and thus Bob can find the original size of each packet and extract Alice's bits. We can show that if the pmf of the packet sizes satisfies some conditions, with high probability Alice can transmit all Jack's bits from her buffer as well as $\mathcal{O}(\sqrt{n})$ bits of her own. The packet stream transmitted by Bob has all Jack's bits, but if Jack transmits the $k^{\mathrm{th}}$ bit in the $l^{\mathrm{th}}$ packet, the $k^{\mathrm{th}}$ bit is not necessarily in the $l^{\mathrm{th}}$ packet transmitted by Bob. 

In the second scheme, Alice uses dummy bits. Assume Alice observes a packet that is selected by the key and that the size of the packet is $S$. Alice chooses an integer $\ell$, where $\ell < \lfloor K/2 \rfloor $. If $S \leq S_{\max}-2\ell$, Alice adds $\ell$ of her bits to the end of the packet. If $S > S_{\max}-2\ell$, she adds $\ell$ dummy packets. Assume Bob observes a packet that is selected by the key and the size of the packet is $S'$. If $S'\leq S_{\max}-\ell$, he extracts $\ell$ bits from the end of the packet. The reason that Alice has to insert dummy bits in the packets of size $S > S_{\max}-2\ell$ is that the new size will be $S'>S_{\max}-\ell$ so Bob will have uncertainty about whether they contain $\ell$ bits from Alice or since they did not have enough space Alice did not insert any bits in them. Note that Bob cannot remove the dummy bits so that Steve will suffer from dummy bits in some packets of size $S > S_{\max}-\ell+1$. Alternatively, Alice can add $S_{\max}-1-S$ of her won bits to a packet of size $S<S_{\max}-1$, and one dummy bit to a packet of size $S_{\max}-1$. In this case, only some of the packets of maximum size will have dummy bits; however, since the mapping between the old and new size does not preserve the old size, some of the inserted bits have to be used to indicate the original size of the packets. 

\section{Future Work} \label{sec:f}
In the future work, we will assume that Willie can see the contents of the packets so he can employ entropy attacks to detects Alice's bit insertion, i.e., if he knows the $i^{\mathrm{th}}$ order of the entropy of Jacks' bits, he can verify that to restrict Alice's bit insertion. Therefore, Alice has to employ a coding where generates bits whose entropy are close Jack's bits. Even if Alice does not employ such this scheme, since Alice only manipulates $\mathcal{O}(\sqrt{n})$ packets, the conjecture is that the change of the entropy of the bits is covert and Alice can still insert $\mathcal{O}(\sqrt{n})$ bits when the packet sizes are i.i.d. We will verify this conjecture in the future work. Furthermore, we will consider the compression of Jacks bits to achieve more space available in the packets for Alice's bit insertion. 

\section{Conclusion} \label{con}
In this paper, we presented the fundamental limits of a third technique for the transmission of covert information on network links, covert insertion of bits in the payload of packets with available space, where the first and second schemes were packet insertion and timing channel presented in~\cite{soltani2015covert,soltani2016allerton}. In a network link where an authorized transmitter Jack sends a flow of $n$ packets intended for Steve, and the flow visits Alice (covert transmitter), Willie (network warden), Bob (covert receiver), and Steve (intended receiver), respectively, we have established that if the packet sizes are i.i.d. with a probability mass function whose support allows at least two possible packet sizes, Alice can insert $\mathcal{O}(\sqrt{n})$ bits in the packets of the flow while lower bounding Willie's error probability $\mathbb{P}_{e}$ by $\frac{1}{2}-\epsilon$ for all $\epsilon>0$. Furthermore, we showed that if Alice inserts $\omega(\sqrt{n})$ bits, Willie will detect her with an arbitrarily small error probability $\mathbb{P}_{e}$. For the case of dependent packet sizes, we showed that the average number of bits that Alice can insert in a stream of $n$ packets from Jack is $\mathcal{O}(c(n)/\sqrt{n})$, where $c(n)$ is the average number of conditional pmfs of packet sizes given the history with support of minimum size two.

\appendix

By~\eqref{eq:14}, 
\begin{align}
\nonumber &\sum_{y=0}^{1} \tilde{h}_{Y|X}(y,x) \log{\frac{\tilde{h}_{Y|X}(y,x)}{2}}\\  
\label{eq:17}   &= \frac{1-r}{2} \log (1-r) + \frac{1+r}{2} \log(1+r)\leq r^2,
\end{align}
where $r=\frac{ f(x-m)p}{ f(x)(1-p)+f(x-m)p}$, and the last step is true since $0\leq r\leq 1$, $\log(1+r)\leq r$, and $\log(1-r)\leq -r$. Note that $r^2\leq \left(\frac{ f(x-m)p}{ f(x)(1-p)}\right)^2$. By~\eqref{eq:xi0}, $r^2\leq \left(\frac{ \xi_0 p}{ 1-p}\right)^2$. Combined with~\eqref{eq:17}, the proof is complete. 

\bibliographystyle{ieeetr}


\begin{thebibliography}{10}

\bibitem{nichols2001wireless}
R.~K. Nichols, P.~Lekkas, and P.~C. Lekkas, {\em Wireless security}.
\newblock McGraw-Hill Professional Publishing, 2001.

\bibitem{takbiri2018asymptotic}
N.~Takbiri, R.~Soltani, D.~L. Goeckel, A.~Houmansadr, and H.~Pishro-Nik,
  ``Asymptotic loss in privacy due to dependency in gaussian traces,'' {\em
  arXiv preprint arXiv:1809.10289}, 2018.

\bibitem{ISIT17-longversion1}
N.~Takbiri, A.~Houmansadr, D.~L. Goeckel, and H.~Pishro{-}Nik, ``Matching
  anonymized and obfuscated time series to users' profiles,'' {\em IEEE
  Transactions on Information Theory, accepted for publication, 2018, Available
  at https://arxiv.org/abs/1710.00197}.

\bibitem{hadian2016privacy}
M.~Hadian, X.~Liang, T.~Altuwaiyan, and M.~M. Mahmoud, ``Privacy-preserving
  mhealth data release with pattern consistency,'' in {\em Global
  Communications Conference (GLOBECOM), 2016 IEEE}, pp.~1--6, IEEE, 2016.

\bibitem{hadian2018privacy}
M.~Hadian, T.~Altuwaiyan, X.~Liang, and W.~Li, ``Privacy-preserving voice-based
  search over mhealth data,'' {\em Smart Health}, 2018.

\bibitem{nazanin_ISIT2018}
N.~Takbiri, A.~Houmansadr, D.~L. Goeckel, and H.~Pishro-Nik, ``Privacy against
  statistical matching: Inter- user correlation,'' in {\em International
  Symposium on Information Theory (ISIT)}, (Vail, Colorado, USA), 2018.

\bibitem{talb2006}
J.~Talbot and D.~Welsh, {\em Complexity and Cryptography: An Introduction}.
\newblock Cambridge University Press, 2006.

\bibitem{bloch11pls}
M.~Bloch and J.~Barros, {\em Physical-Layer Security}.
\newblock Cambridge, UK: Cambridge University Press, 2011.

\bibitem{snowden}
``Edward {Snowden}: Leaks that exposed {US} spy programme.''
  \url{http://www.bbc.com/news/world-us-canada-23123964}, Jan 2014.

\bibitem{simon94ssh}
M.~K. Simon, J.~K. Omura, R.~A. Scholtz, and B.~K. Levitt, {\em Spread Spectrum
  Communications Handbook}.
\newblock McGraw-Hill, 1994.

\bibitem{ker07pool}
A.~D. Ker, ``Batch steganography and pooled steganalysis,'' vol.~4437 of {\em
  Lecture Notes in Computer Science}, pp.~265--281, Springer Berlin Heidelberg,
  2007.

\bibitem{ker2009square}
A.~D. Ker, ``The square root law requires a linear key,'' in {\em Proceedings
  of the 11th ACM workshop on Multimedia and security}, pp.~85--92, ACM, 2009.

\bibitem{filler2009square}
T.~Filler, A.~D. Ker, and J.~Fridrich, ``The square root law of steganographic
  capacity for markov covers,'' in {\em Media Forensics and Security},
  vol.~7254, p.~725408, International Society for Optics and Photonics, 2009.

\bibitem{petitcolas1999information}
F.~A. Petitcolas, R.~J. Anderson, and M.~G. Kuhn, ``Information hiding-a
  survey,'' {\em Proceedings of the IEEE}, vol.~87, no.~7, pp.~1062--1078,
  1999.

\bibitem{bash_isit2012}
B.~Bash, D.~Goeckel, and D.~Towsley, ``Square root law for communication with
  low probability of detection on {AWGN} channels,'' in {\em Information Theory
  Proceedings (ISIT), 2012 IEEE International Symposium on}, pp.~448--452, July
  2012.

\bibitem{bash_jsac2013}
B.~Bash, D.~Goeckel, and D.~Towsley, ``Limits of reliable communication with
  low probability of detection on {AWGN} channels,'' {\em Selected Areas in
  Communications, IEEE Journal on}, vol.~31, pp.~1921--1930, September 2013.

\bibitem{jaggi_isit2013}
P.~H. Che, M.~Bakshi, and S.~Jaggi, ``Reliable deniable communication: Hiding
  messages in noise,'' in {\em Information Theory Proceedings (ISIT), 2013 IEEE
  International Symposium on}, pp.~2945--2949, July 2013.

\bibitem{bash_isit2013}
B.~Bash, S.~Guha, D.~Goeckel, and D.~Towsley, ``Quantum noise limited optical
  communication with low probability of detection,'' in {\em Information Theory
  Proceedings (ISIT), 2013 IEEE International Symposium on}, pp.~1715--1719,
  July 2013.

\bibitem{bash_isit2014}
B.~A. {Bash}, D.~{Goeckel}, and D.~{Towsley}, ``{LPD Communication when the
  Warden Does Not Know When},'' in {\em Information Theory Proceedings (ISIT),
  2014 IEEE International Symposium on}.

\bibitem{bash2015hiding}
B.~A. Bash, D.~Goeckel, D.~Towsley, and S.~Guha, ``Hiding information in noise:
  Fundamental limits of covert wireless communication,'' {\em IEEE
  Communications Magazine}, vol.~53, no.~12, pp.~26--31, 2015.

\bibitem{bloch2016covert}
M.~R. Bloch, ``Covert communication over noisy channels: A resolvability
  perspective,'' {\em IEEE Transactions on Information Theory}, vol.~62, no.~5,
  pp.~2334--2354, 2016.

\bibitem{soltani2014covert}
R.~Soltani, B.~Bash, D.~Goeckel, S.~Guha, and D.~Towsley, ``Covert single-hop
  communication in a wireless network with distributed artificial noise
  generation,'' in {\em Communication, Control, and Computing (Allerton), 2014
  52nd Annual Allerton Conference on}, pp.~1078--1085, IEEE, 2014.

\bibitem{soltani2015covert}
R.~Soltani, D.~Goeckel, D.~Towsley, and A.~Houmansadr, ``Covert communications
  on poisson packet channels,'' in {\em 2015 53rd Annual Allerton Conference on
  Communication, Control, and Computing (Allerton)}, pp.~1046--1052, IEEE,
  2015.

\bibitem{soltani2016allerton}
R.~Soltani, D.~Goeckel, D.~Towsley, and A.~Houmansadr, ``Covert communications
  on renewal packet channels,'' in {\em 2016 54th Annual Allerton Conference on
  Communication, Control, and Computing (Allerton)}, IEEE, 2016.

\bibitem{soltani2017towards}
R.~Soltani, D.~Goeckel, D.~Towsley, and A.~Houmansadr, ``Towards provably
  invisible network flow fingerprints,'' in {\em 2017 51st Asilomar Conference
  on Signals, Systems, and Computers}, pp.~258--262, Oct 2017.

\bibitem{soltani2018covert}
R.~Soltani, D.~Goeckel, D.~Towsley, B.~Bash, and S.~Guha, ``Covert wireless
  communication with artificial noise generation,'' {\em IEEE Transactions on
  Wireless Communications}, pp.~1--1, 2018.

\bibitem{soltani2018fundamental}
R.~Soltani, D.~Goeckel, D.~Towsley, and A.~Houmansadr, ``Fundamental limits of
  invisible flow fingerprinting,'' {\em arXiv preprint arXiv:1809.08514}, 2018.

\bibitem{tahmasbi2017first}
M.~Tahmasbi and M.~R. Bloch, ``First and second order asymptotics in covert
  communication with pulse-position modulation,'' {\em arXiv preprint
  arXiv:1703.01362}, 2017.

\bibitem{arumugam2018embedding}
K.~S.~K. Arumugam and M.~R. Bloch, ``Embedding covert information in broadcast
  communications,'' {\em arXiv preprint arXiv:1808.09556}, 2018.

\bibitem{handel1996hiding}
T.~G. Handel and M.~T. Sandford, ``Hiding data in the osi network model,'' in
  {\em International Workshop on Information Hiding}, pp.~23--38, Springer,
  1996.

\bibitem{rowland1997covert}
C.~H. Rowland, ``Covert channels in the tcp/ip protocol suite,'' {\em First
  Monday}, vol.~2, no.~5, 1997.

\bibitem{fisk2002eliminating}
G.~Fisk, M.~Fisk, C.~Papadopoulos, and J.~Neil, ``Eliminating steganography in
  internet traffic with active wardens,'' in {\em International Workshop on
  Information Hiding}, pp.~18--35, Springer, 2002.

\bibitem{ahsan2002practical}
K.~Ahsan and D.~Kundur, ``Practical data hiding in tcp/ip,'' in {\em Proc.
  Workshop on Multimedia Security at ACM Multimedia}, vol.~2, 2002.

\bibitem{shah2011network}
M.~K. Shah and S.~B. Patel, ``Network based packet watermarking using tcp/ip
  protocol suite,'' in {\em Engineering (NUiCONE), 2011 Nirma University
  International Conference on}, pp.~1--5, IEEE, 2011.

\bibitem{cauich2005data}
E.~Cauich, R.~G. C{\'a}rdenas, and R.~Watanabe, ``Data hiding in identification
  and offset ip fields,'' in {\em International Symposium and School on
  Advancex Distributed Systems}, pp.~118--125, Springer, 2005.

\bibitem{cormen2009introduction}
T.~H. Cormen, {\em Introduction to algorithms}.
\newblock MIT press, 2009.

\bibitem{cover2012elements}
T.~M. Cover and J.~A. Thomas, {\em Elements of information theory}.
\newblock John Wiley \& Sons, 2012.

\bibitem{zhang2014network}
L.~Zhang, G.~Liu, and Y.~Dai, ``Network packet length covert channel based on
  empirical distribution function,'' {\em Journal of networks}, vol.~9, no.~6,
  p.~1440, 2014.

\bibitem{garsva2015packet}
E.~Garsva, N.~Paulauskas, and G.~Grazulevicius, ``Packet size distribution
  tendencies in computer network flows,'' in {\em Electrical, Electronic and
  Information Sciences (eStream), 2015 Open Conference of}, pp.~1--6, IEEE,
  2015.

\bibitem{farber2002network}
J.~F{\"a}rber, ``Network game traffic modelling,'' in {\em Proceedings of the
  1st workshop on Network and system support for games}, pp.~53--57, ACM, 2002.

\bibitem{salvador2004modeling}
P.~Salvador, A.~Pacheco, and R.~Valadas, ``Modeling ip traffic: joint
  characterization of packet arrivals and packet sizes using bmaps,'' {\em
  Computer Networks}, vol.~44, no.~3, pp.~335--352, 2004.

\bibitem{yao2008coverting}
Q.-z. Yao and P.~Zhang, ``Coverting channel based on packet length,'' {\em
  Computer engineering}, vol.~34, no.~3, pp.~183--185, 2008.

\bibitem{ji2009novel}
L.~Ji, W.~Jiang, B.~Dai, and X.~Niu, ``A novel covert channel based on length
  of messages,'' in {\em Information Engineering and Electronic Commerce, 2009.
  IEEC'09. International Symposium on}, pp.~551--554, IEEE, 2009.

\bibitem{nair2011length}
A.~S. Nair, A.~Kumar, A.~Sur, and S.~Nandi, ``Length based network
  steganography using udp protocol,'' in {\em Communication Software and
  Networks (ICCSN), 2011 IEEE 3rd International Conference on}, pp.~726--730,
  IEEE, 2011.

\bibitem{ji2009normal}
L.~Ji, H.~Liang, Y.~Song, and X.~Niu, ``A normal-traffic network covert
  channel,'' in {\em Computational Intelligence and Security, 2009. CIS'09.
  International Conference on}, vol.~1, pp.~499--503, IEEE, 2009.

\end{thebibliography}

\end{document}